\newtheorem{remark}{Remark}
\newtheorem{fact}{Fact}
\newtheorem{theorem}{Theorem}[section]
\newtheorem{lemma}[theorem]{Lemma}
\newtheorem{definition}[theorem]{Definition}
\definecolor{Blue}{HTML}{2D2F92}
\newtheorem{result}{Theorem}
\title{Dimension-Accuracy Tradeoffs in Contrastive Embeddings\\ for Triplets, Terminals \& Top-$k$ Nearest Neighbors}
\author{Vaggos Chatziafratis\thanks{Part of this work was done while being supported by a FODSI postdoc fellowship at MIT and Northeastern.}\\University of California, Santa Cruz\\\texttt{vaggos@ucsc.edu} \and Piotr Indyk\\MIT \\\texttt{indyk@mit.edu}}
\date{}
\begin{document}
\maketitle
\thispagestyle{empty}
\begin{abstract}
Metric embeddings traditionally study how to map $n$ items to a target metric space such that distance lengths are not heavily distorted; but what if we only care to preserve the \textit{relative order of the distances (and not their length)}? In this paper, we are motivated by the following basic question: given triplet comparisons of the form ``item $i$ is closer to item $j$ than to item $k$,'' can we find low-dimensional Euclidean representations for the $n$ items that respect those distance comparisons? Such order-preserving embeddings naturally arise in important applications ---recommendations, ranking, crowdsourcing, psychometrics,  nearest-neighbor search--- and  have been studied since the 1950s, under the name of \textit{ordinal} or \textit{non-metric} embeddings. 

Our main results are:

\begin{itemize}
    \item \textbf{Nearly-Tight Bounds on Triplet Dimension:} We introduce the natural concept of \textit{triplet dimension} of a dataset, and surprisingly, we show that in order for an ordinal embedding to be triplet-preserving, its dimension needs to grow as $\frac n2$ in the worst case. This is optimal (up to constant) as $n-1$ dimensions always suffice.
    \item \textbf{Tradeoffs for Dimension vs (Ordinal) Relaxation:} We then relax the requirement that every triplet should be exactly preserved and present almost tight lower bounds for the maximum ratio between distances whose relative order was inverted by the embedding; this ratio is known as (ordinal) \textit{relaxation} in the literature and serves as a counterpart to (metric) distortion.
    \item \textbf{New Bounds on Terminal and Top-$k$-NNs Embeddings:} Going beyond triplets, we then study two well-motivated scenarios where we care about preserving specific sets of distances (not necessarily triplets). The first scenario is \textit{Terminal Ordinal Embeddings} where we want to preserve relative  distance orders to $k$ given items (the ``terminals''), and for that we present matching upper and lower bounds. The second scenario is  \textit{top-$k$-NNs Ordinal Embeddings}, where for each item we want to preserve the relative order of its $k$ nearest neighbors, for which we present lower bounds. 
\end{itemize}

To the best of our knowledge, these are some of the first tradeoffs on triplet-preserving ordinal embeddings and the first study of Terminal and Top-$k$-NNs Ordinal Embeddings.   

\end{abstract}


\newpage
\tableofcontents
\thispagestyle{empty}
\newpage
\section{Introduction}
\label{sec:intro}
\setcounter{page}{1}
Given $n$ items of interest endowed with some abstract notion of ``distance'' (not necessarily a metric), we often wish to represent them as a configuration of $n$ points in some convenient target metric space, commonly a low-dimensional Euclidean space or a tree metric. Having such  representations has been proven crucial for speeding up computation, reducing memory needs and has led to deep algorithmic and mathematical insights. 

Such representations lie at the heart of many applications including nearest-neighbor searching, ad placement, recommendation systems, crowdsourcing, social networks, clustering, visualization and even psychometrics. Since distances encode interesting information about a dataset, the metric embeddings literature has studied methods to preserve those pairwise distances (either exactly or with distortion), and has yielded various tradeoffs  between the faithfulness of the embedding and its dimensionality~(\cite{matouvsek2013lecture}). Our work is motivated by the following two observations: 
\begin{itemize}
    \item First, notice that many of the aforementioned applications \textit{do not rely on the distances per se, but rather they rely on the} \textit{relative order of those distances}. For example, in recommendation systems or online ad placements, the ranking of which $n$ items to show is what matters, whereas pairwise distance lengths are of little importance. Moreover, at the heart of many of these applications is the fundamental problem of nearest-neighbor search~(\cite{andoni2008near})---asking for the closest point from a dataset to a given query point $q$---which is intrinsically a question about relative orderings, rather than absolute distances. 
    \item Second, the $n$ items of interest may lie in an abstract space where even the notion of pairwise distance may be severely underspecified and hard to evaluate. This is especially common in psychometrics~(\cite{torgerson1952multidimensional,thurstone1954measurement,kruskal1964multidimensional,kruskal1964nonmetric}), where humans are asked to answer queries about their feelings,  preferences etc., and in crowdsourcing marketplaces (e.g., Mechanical Turk) where ``workers'' are paid to provide responses to a series of questions. Because humans are surprisingly bad and inconsistent at answering \textit{cardinal} questions (how \textit{much} did you like this movie or this restaurant?), yet very fast and accurate at answering \textit{ordinal} questions (did you enjoy $A$ \textit{more} than $B$?), the deployed queries are usually (paired) comparisons~(\cite{thurstone1954measurement}) between items, and as such they only provide an indirect access to some underlying notion of ``distance'' for the items.
\end{itemize}



\noindent\paragraph{Computational Task.} The natural question that arises in the scenarios described above is whether there exist low-dimensional Euclidean representations that preserve the \textit{relative ordering} of distances (first observation), given perhaps \textit{incomplete} information about how underlying distances are related (second observation). More concretely, we are interested in the following basic question:

\begin{center}
    \textit{Given a collection of (triplet) comparisons of the form ``A is more similar to B than to C'', are there low-dimensional embeddings that respect the relative order of distances (not their length)?}
\end{center}


\noindent\paragraph{Our Contribution.} Our main contribution is to derive several tradeoffs that arise between the faithfulness of the order-preserving embedding and its dimensionality. For the case where the embedding needs to respect all of the (triplet) comparisons (Sec.~\ref{subsec:exact}), we give nearly-tight bounds  for the dimension. For the case where we allow some distances to be inverted (Sec.~\ref{subsec:non-exact}), we present almost tight lower bounds for the ``ordinal relaxation'' of the embedding (this is the analogue of the notion of ``distortion'' in metric embeddings, see definition below). Finally, we also study two new settings motivated by terminal embeddings (Sec.~\ref{sec:terminal}) and by top-$k$-NNs preservation (Sec.~\ref{sec:top-k}).

\subsection{Related Work}

The aforementioned order-preserving embeddings are usually referred to in the literature as \textit{ordinal}, or \textit{non-metric} embeddings, or \textit{monotone} maps. In this paper, the terms ``contrastive embeddings'' and ``ordinal embeddings'' are used interchangeably to refer to the goal of finding embeddings that preserve the relative order of distances (not their exact length); having access to such contrastive information is popular in contrastive learning~(\cite{smith2005contrastive,saunshi2022understanding}). 

We give here the definition: 

\begin{definition} [Ordinal Embedding, Ordinal Dimension]\label{def:ordinal}
Let $X = ([n], \delta)$ be any metric space on $n$ points, and let $\lVert\cdot\rVert$ be a norm on $\mathbf{R}^d$. We say that $\phi: X\to (\mathbf{R}^d,\lVert\cdot\rVert)$ is an \emph{ordinal embedding} if
for every $x, y, z, w \in X$, we have the following: $\delta(x, y) < \delta(w, z) \iff \lVert\phi(x) - \phi(y)\rVert < \lVert\phi(w) - \phi(z)\rVert$. Moreover, the \emph{ordinal dimension} of $X$ is defined to be the smallest
dimension of a Euclidean space into which $X$ can be ordinally embedded.
\end{definition}

The study on ordinal embeddings dates back to the early days of Multi-Dimensional Scaling (MDS) with some of the seminal works of the field in the 1950s and 1960s. At the time, MDS was heavily used in (mathematical) psychology and psychometrics, where the works of~\cite{torgerson1952multidimensional}, \cite{shepard1962analysis,shepard1974representation,cunningham1974monotone}, \cite{kruskal1964multidimensional,kruskal1964nonmetric} laid the foundations of many concepts in the field; these works attempted to formalize explicitly, the implicit connections between human-perceived similarities and differences among items (e.g., faces, tastes, odors etc.) via suitable data representations, and to propose methods for finding such representations obeying subjective descriptions. 

In this early context from psychometrics, our basic question above for triplet comparisons asking ``is item $A$ more similar to $B$ or to $C$'' would be equivalent to asking whether ``$\delta(A,B) < \delta(A,C)$'', which is an ordinal question that is easy for humans to answer (rather than exactly specifying values for the length of the distances). Later works made important steps towards understanding tradeoffs in ordinal embeddings, and the three most related to our work is the paper by~\cite{bilu2005monotone} and by~\cite{alon2008ordinal,buadoiu2008ordinal}, who considered several special cases, where we have access to the \textit{complete} set of all $\binom{\binom{n}{2}}{2}$ distance comparisons (i.e., $\delta(i,j)\lessgtr \delta(k,l), \forall i,j,k,l\in X$)~(\cite{bilu2005monotone,alon2008ordinal}), or  the case when the source and target metrics come from simple families: in~\cite{buadoiu2008ordinal} they provide approximation algorithms to embed unweighted graphs into a line metric and into a tree metric, and they also consider the embedding of unweighted trees into the line. For the case of embedding onto a line or a hierarchical tree (ultrametric), where the triplet comparisons are seen as a set of constraints with the goal of maximizing the number of constraints satisfied by the embedding, there are tight hardness of approximation results~(\cite{guruswami2008beating,chatziafratis2022phylogenetic}). Moreover, several statistical properties and sample complexity bounds based on queries about distances on four points $i,j,k,l$ have been studied in multiple works~\cite{agarwal2007generalized,terada2014local,kleindessner2014uniqueness,ghoshdastidar2019foundations,ghosh2019landmark}.

An important notion for ordinal embeddings  introduced by~\cite{alon2008ordinal} is the notion of \textit{ordinal relaxation}, which serves as the counterpart of the well-studied notion of distortion of metric embeddings (see e.g.,~\cite{indyk20178}):

\begin{definition} [Ordinal Relaxation] Given an ordinal embedding $\phi: (X,\delta)\to (Y,\delta')$, we say it has multiplicative (ordinal) relaxation $\alpha\ge1$, if $\alpha \delta(i,j)<\delta(k,l)\implies \delta'(i,j)<\delta'(k,l)$. 
\end{definition}

To put it simply, significantly different distances (in the original space) should have their relative order preserved by the embedding,\footnote{As pointed out in the original work~\cite{alon2008ordinal}, in ordinal embeddings we want to respect distance equality, but in an ordinal embedding with relaxation $1$, we may break ties.} or equivalently, relaxation is the maximum ratio between two distances whose relative order got inverted by the embedding. Minimum-relaxation ordinal embeddings were originally studied in~\cite{alon2008ordinal} and~\cite{buadoiu2008ordinal}, who established that ordinal embeddings have important differences from metric embeddings. They developed several approximation algorithms for ordinal embeddings on a line ($1$-dimensional Euclidean space) or on a tree, for various interesting cases such as source metrics induced by ultrametrics, or by the shortest-path metric of unweighted trees and unweighted graphs. Notice that by definition, for any source and target metrics, the optimal relaxation is at most the optimal distortion~(\cite{alon2008ordinal}). Moreover, for any $n$-point metric space, observe that a general $O(\log n)$ upper bound on relaxation into $O(\log n)$-dimensional Euclidean space follows easily by Bourgain's metric embedding theorem~(\cite{bourgain1985lipschitz}) coupled with the Johnson-Lindenstrauss lemma~(\cite{johnson1986extensions}). Another well-known fact is that $n$-dimensional Euclidean space suffices to ordinally embed any $X$, i.e., the ordinal dimension for any metric space $X$ on $n$ points is at most $n$ (to be exact, $n-1$ dimensions always suffice)~(\cite{bilu2005monotone,alon2008ordinal}).

\subsection{Motivating Questions}
Despite their long history and the abovementioned general results, ordinal embeddings ---both exact, or with relaxation--- are not well-understood. It is easy to see that if we care about exact preservation of the distance ordering, then the Johnson-Lindenstrauss lemma is of no use in this case: the error parameter $\epsilon$ would have to be tiny as it needs to scale with the smallest distance gap (which could be $O\left(\tfrac1n\right)$), essentially yielding no dimension reduction at all (recall, the final dimension would be $O(\tfrac{\log n}{\epsilon^2})$). Given this failure, we set out to address several basic questions on low-dimensional ordinal embeddings:
\begin{itemize}
    \item \textbf{Q1} (Triplets, Exact \& Relaxed): In many applications, we do not have access to the complete set of $\binom{\binom{n}{2}}{2}$ comparisons for all distances $\delta(i,j)\lessgtr\delta(k,l)$ (as assumed in~\cite{bilu2005monotone,alon2008ordinal,buadoiu2008ordinal}), rather we are given information on \textit{triplets} of items of the form ``item $i$ is closer to item $j$ than to item $k$'' or equivalently, $\delta(i,j)<\delta(i,k)$ for some abstract distance $\delta$. Are there Euclidean representations with low dimension that respect the relative ordering on triplets, either exactly or with relaxation?
    \item \textbf{Q2} ($k$ Terminals, Upper \& Lower Bound): 
    Given a set of $k$ special items $T=\{t_1,\ldots, t_k\}$ (the ``terminals''), we want to find Euclidean representations so as to preserve all distance comparisons from the perspective of each terminal to the rest of the items. Are there algorithms that find such representations? What is the minimum dimension needed?
    \item \textbf{Q3} (Top-$k$-NNs, With \& Without Mixed Comparisons): What if we only care to preserve the relative distances from each item $i$ to its set $\mathrm{NN}(i)$ containing the $k$ nearest neighbors in the dataset, i.e., comparisons of the form $\delta(i,i'), \delta(i,i'')$ for $i',i''\in \mathrm{NN}(i)$? Is the minimum dimension needed different, if we also cared to preserve the distance ordering for mixed comparisons $\delta(i,i')$ vs $\delta(j,j')$ for $i'\in \mathrm{NN}(i), j' \in \mathrm{NN}(j), i\neq j$?
    \item \textbf{Q4} (Different Regimes for $k$): Is the answer to the above questions different for various regimes of the parameter $k$? Is there a difference in the upper or lower bounds when $k=\Theta(n)$, or $k=o(n)$, or even constant independent of $n$ (number of items)?
\end{itemize}
 We believe that such questions are natural steps towards a better understanding of ordinal embeddings; they also are well-motivated from practical considerations, as we discuss in Sec.~\ref{subsec:further}.

\subsection{Our Results} 

We provide several results for each of the questions listed above. Our work was primarily inspired by the lack of theoretical bounds for the basic problem of preserving triplets either exactly or approximately (Q1), and also by the lack of results for terminal and top-$k$-NNs ordinal embeddings (Q2 and Q3), in stark contrast to their counterparts in metric embeddings. Our main results can be summarized informally as follows:
\begin{itemize}
    \item (R1) For preserving the order of all triplets exactly, we introduce the concept of \textit{triplet dimension}, i.e., the minimum dimension needed by an ordinal embedding that respects all triplet comparisons. This is the natural analogue of ordinal dimension from Definition~\ref{def:ordinal} but specialized for triplets (see also~\cite{reiterman1989geometrical} for other geometric notions of dimension). Perhaps surprisingly, we prove that the triplet dimension can grow linearly in $n$ and may need to be at least $\tfrac n2$ (Theorem~\ref{th:exact_triplet}); this is tight because it is a folklore result that $n-1$ dimensions always suffice. In the case where we allow for some of  the triplet orderings to be inverted, we prove a lower bound on the relaxation of $\Omega\left(\tfrac{\log n}{\log d+\log\log n}\right)$ for any dimension $d$ and $n$ (Theorem~\ref{th:relax_triplet}). This is nearly tight (up to $\log\log n$ factors) because using Bourgain's theorem and the JL lemma, we could obtain relaxation $O(\log n)$ for any metric into $O(\log n)$-dimensional Euclidean space.
    \item (R2) For $k$ terminals ($k$ not necessarily fixed), we present a simple, yet optimal algorithm that allows us to embed the dataset into $k$-dimensional space such that for each of the $k$ terminals, all relative distance orders to the remaining $n-k$ points are exactly preserved (Theorem~\ref{th:upper_terminal}). We complement this with tight lower bounds (Theorems~\ref{th:exact_terminal} and~\ref{th:lower_sub}) showing that $\Omega(k)$ dimensions are indeed necessary.
    \item (R3) For preserving the top-$k$-NNs of each point, we present an $\Omega(k)$ lower bound on the dimension (Theorem~\ref{th:topNNs}). As we will see, this is only slightly affected by whether or not we preserve the distance ordering for mixed comparisons $\delta(i,i')$ vs $\delta(j,j')$ for $i'\in \mathrm{NN}(i), j' \in \mathrm{NN}(j), i\neq j$.
    \item (R4) We show how the value of $k$ actually affects the dimension needed for ordinal embeddings. Specifically, there is a difference for the lower bounds we can obtain for $k=\Theta(n)$ and for $k=o(n)$.
\end{itemize}
To the best of our knowledge, these are some of the first results for triplet, terminal, and top-$k$-NNs ordinal embeddings. In addition, our results are often tight, they extend prior works~(\cite{bilu2005monotone,alon2008ordinal,buadoiu2008ordinal}) and complement many of the empirical works for triplets~(\cite{schultz2003learning,tamuz2011adaptively,jamieson2011low,jain2016finite,kleindessner2017kernel,ghosh2019landmark}) or other types of ordinal embeddings~(\cite{agarwal2007generalized,terada2014local,kleindessner2014uniqueness,ghoshdastidar2019foundations}).


\subsection{Our Techniques}
We build upon and extend the tools used in~\cite{bilu2005monotone,alon2008ordinal,buadoiu2008ordinal} who handled the complete case where all $\binom{\binom{n}{2}}{2}$ distance comparisons were available. Here, in order to handle relaxed ordinal embeddings for non-complete inputs, we have to come up with more elaborate constructions (see Theorem~\ref{th:relax_triplet}). Our results use previous constructions of dense high-girth graphs~(\cite{erdregukre,sauer1970existence}) and rely on sampling edges from those graphs, and on several counting arguments depending on which case we deal with. 

More specifically, we present three types of results: lower bounds for triplets with relaxation, lower bounds for various types of exact ordinal embeddings (triplet, terminal, top-$k$-NNs), and (optimal) upper bounds for $k$ terminal embeddings. 

The intuition behind our proofs for getting the relaxation lower bounds for triplets relies on finding a large number of metric spaces with significantly different behavior on triplet distances, at least from the perspective of a single vertex $v$. To find such metric spaces, we start from a high girth graph $G$ (unweighted) and consider an appropriate number of subgraphs $G'$, chosen at random. Notice that if an edge $(u,v)$ was present in $G$, but after sampling, it is not included in $G'$, then the distance $u,v$ in $G'$ suddenly becomes large (as much as the girth). Our derived tradeoffs for ordinal embeddings essentially stem from the known constructions of high girth graphs and any improvement on the latter yields improvement on our lower bounds. 

Regarding exact ordinal embeddings, at their heart, our proofs are based on various counting arguments. We compare the total number of different orderings for the distances defined on triples of points against the total number of ways an ordinal embedding can embed the $n$ points in $d$ dimensions such that a given ordering on all triplets is satisfied. Similar arguments go through for the other types of embeddings we consider. We then observe that when the dimension $d$ is ``small'' compared to $n$, there will be significantly more triplet orderings than distinct embeddings, which will give us the lower bound. A similar approach was carried out in~\cite{bilu2005monotone} for complete instances where all distance comparisons $\delta(i,j)\lessgtr\delta(k,l)$ were given as input.

Finally, for terminal ordinal embeddings we give a direct construction on where to embed the $k$ terminals and the rest of the points such that distance orders from the rest of the points to the terminals are exactly preserved. We use $k$ dimensions and this matches our lower bound for terminal ordinal embeddings.



\subsection{Further Motivation and Related Work}\label{subsec:further}
The notion of ordinal relaxation defined previously was introduced in~\cite{alon2008ordinal}. Ordinal embeddings for structured metric spaces such as tree metrics and ultrametrics were also studied in the work of~\cite{alon2008ordinal} and our original motivation for studying ordinal embeddings came from studying tree metrics and ultrametrics, in particular in the context of triplet-based hierarchical clustering~\citep{chatziafratis2022phylogenetic}. We should note here that the goal in~\cite{alon2008ordinal} was to design approximation algorithms to approximately minimize the ordinal relaxation, in contrast to recent global objective functions used in hierarchical clustering~\citep{dasgupta2016cost,charikar2017approximate,moseley2023approximation,cohen2019hierarchical,charikar2019hierarchicalsoda,charikar2019hierarchical,chatziafratis2020bisect,alon2020hierarchical,naumov2021objective}. 

Regarding triplet-preserving ordinal embeddings (Q1), obtaining information about a dataset based on triplet comparisons like ``which of $j$ and $k$ is closer to $i$'' is used in crowdsourcing and online platforms, to elicit user preferences and to perform downstream tasks such as clustering and nearest neighbor search. Important works, both theoretical and empirical, that focus on triplet embeddings under various settings include ~\cite{schultz2003learning,tamuz2011adaptively,jamieson2011low,van2012stochastic}, and later~\cite{jain2016finite,kleindessner2017kernel,korlakai2016crowdsourced,lohaus2019uncertainty,vankadara2019insights,ghosh2019landmark,fan2020learning, haghiri2020estimation}. Triplet feedback of the form ``$i$ is closer to $j$ than to $k$''  is known to be much more reliable than absolute comparisons and much easier for humans to answer. For example, the question ``are cats similar to tigers?'' might yield conflicting answers, but the triplet query ``are cats more similar to tigers or to dolphins?'' is easier as humans have to pick the ``odd-one-out'' among the $3$ alternatives; in addition, such triplet queries are also useful in the context of computational biology/phylogenetics for inferring ancestry relations in hierarchical clustering~(\cite{byrka2010new,vikram2016interactive,chatziafratis2018hierarchical,emamjomeh2018adaptive,chatziafratis2021maximizing,chatziafratis2022phylogenetic}), and in clustering via hyperbolic embeddings~(\cite{monath2019gradient,chami2020trees}). Triplet feedback is also widely used in contrastive learning (see~\cite{alon2023optimal}).

Regarding ordinal embeddings for terminals (Q2), these are useful in a scenario where we only care to preserve orders for a few $k$ points to the rest of the $(n-k)$ items; this could arise for example, in a facility location or networking application, whenever we have a network comprising many clients and only $k$ servers, and we want to have a simple data structure preserving the client-to-server service times (captured by the distance), but we do not care about client-to-client distance preservation. This is the analogous notion to metric embeddings with terminals that was introduced by~\cite{elkin2017terminal} and later studied in~\cite{mahabadi2018nonlinear,narayanan2019optimal,cherapanamjeri2022terminal}. 

Regarding top-$k$-NNs ordinal embeddings (Q3), in many important applications of embeddings, preserving distance information about nearby points is much more important than
preserving all distances. Indeed, it may be good enough to strictly maintain the order of the top-$k$ nearest points, and for far away objects to just label them as ``far''.   In such
scenarios it is natural to seek local embeddings that maintain only distances of close by neighbors. This has natural applications in ranking, search, and recommendations where often the few top results are viewed, and has been studied extensively in the metric embeddings literature, both in practice~(\cite{belkin2003laplacian,xiao2006duality}) and in theory under the name of \textit{local} (metric) embeddings~(\cite{abraham2007local,indyk2007nearest}) or local versions of dimension reduction~(\cite{schechtman2009lower}). 

Finally, regarding the behaviour of the above questions as we vary $k$ (Q4), we believe it is crucial to understand the various tradeoffs between the parameter $k$, the number of points $n$, and the target space dimension $d$, as various applications may need different value ranges for the parameters. 


\section{Preliminaries}
\label{sec:prelim}
Let $[n]$ denote the set $\{1,2,\cdots,n\}$. Let $X = ([n], \delta)$ be a metric space on $n$ points. Throughout our work, norm $\lVert\cdot\rVert$ is the standard Euclidean $\ell_2$ norm, unless otherwise noted.


\begin{definition}[\textit{Triplet} Ordinal Embedding, \textit{Triplet Dimension}]\label{def:triplet}  Let $\lVert\cdot\rVert$ be a norm on $\mathbf{R}^d$. We say that $\phi: X\to (\mathbf{R}^d,\lVert\cdot\rVert)$ is a \emph{triplet} ordinal embedding if
for every $x, y, z \in X$, we have the following: $\delta(x, y) < \delta(x, z) \iff \lVert\phi(x) - \phi(y)\rVert < \lVert\phi(x) - \phi(z)\rVert$. Moreover, the \emph{triplet dimension} of $X$ is defined to be the smallest
dimension $d$ of a triplet ordinal embedding of $X$.  We sometimes say the embedding is triplet-preserving, or that it preserves all triplet orders.
\end{definition}

\begin{definition}[\textit{Terminal} Ordinal Embedding] \label{def:terminal} Given $X = (V, \delta)$ with $|V|=n$ and a subset $T=\{t_1,\ldots, t_k\}\subseteq V$ of $k$ distinguished elements, which we call terminals, we say that $\phi: (X,T)\to (\mathbf{R}^d,\lVert\cdot\rVert)$ is a \emph{terminal} ordinal embedding if it preserves the distance orders from the rest of the points (the clients in our previous example) to the terminals (the servers), i.e., if
for every $t, t' \in T$ and $x,x'\in V\setminus T$, we have the following: $\delta(t, x) < \delta(t', x') \iff \lVert\phi(t) - \phi(x)\rVert < \lVert\phi(t') - \phi(x')\rVert$. The case $t=t'$ is also included.
\end{definition} 

\begin{definition}[\textit{Top-$k$-NNs} Ordinal Embedding]\label{def:top} Given $X$ as above, define $\mathrm{NN}(i)$ for $i\in X$ to be the set of $k\le n-1$ nearest neighbors of $i$ according to $\delta$. We say that $\phi: X\to (\mathbf{R}^d,\lVert\cdot\rVert)$ is a \emph{top-$k$-NNs} ordinal embedding if it preserves the distance orders among all elements in $NN(i),\forall i$, i.e., if
for every $i, j \in X$ and $i'\in \mathrm{NN}(i),j'\in \mathrm{NN}(j)$, we have the following: $\delta(i, i') < \delta(j, j') \iff \lVert\phi(i) - \phi(i')\rVert < \lVert\phi(j) - \phi(j')\rVert$. If in addition, we required $j\equiv i$ in the previous sentence, then we would get the problem of top-$k$-NNs ordinal embedding \emph{without mixed comparisons}.
\end{definition}

\section{Preserving Order on Triplets and the Triplet Dimension}
In this section, we prove the first main result about triplet-preserving ordinal embeddings whose goal is to respect all triplet comparisons for the distances in the original space. Then, we show a tradeoff for the dimension vs the relaxation, analogous to Bourgain's embedding theorem providing a tradeoff for the dimension vs the distortion. 
\subsection{Lower Bound for Exact Triplet Preservation}\label{subsec:exact}
\begin{theorem}\label{th:exact_triplet}
For every constant $\kappa>0$, and for every large enough $n$ (size of the dataset), no $d$-dimensional embedding in $\ell_2$ can be a triplet ordinal embedding (see Definition~\ref{def:triplet}), unless its dimension grows linearly as $d > \frac {n}{2+\kappa}$. (i.e., dimension must be roughly $n/2$ to preserve all triplets)
\end{theorem}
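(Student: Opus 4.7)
The plan is a counting argument in the spirit of~\cite{bilu2005monotone}, but with a recursive construction to handle the triplet regime, which is weaker than complete ordinal comparison and so needs a metric family with more freedom. I will compare (i)~a lower bound on the number of distinct triplet orderings realized by $n$-point metric spaces with (ii)~an upper bound on the number of triplet orderings realizable by embeddings into $(\mathbf{R}^d, \lVert\cdot\rVert)$ obtained via Warren's theorem on sign patterns of polynomials. If (i) exceeds (ii), then some metric in the family admits no $d$-dimensional triplet ordinal embedding, yielding the lower bound.

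For the upper bound, each triplet comparison $\lVert\phi(i)-\phi(j)\rVert < \lVert\phi(i)-\phi(k)\rVert$ is the sign of the degree-$2$ polynomial $\lVert\phi(i)-\phi(j)\rVert^2 - \lVert\phi(i)-\phi(k)\rVert^2$ in the $N = nd$ coordinate variables of $\phi$, and there are $m = n\binom{n-1}{2} = \Theta(n^3)$ such polynomials. Warren's theorem then gives at most $(4e\cdot 2m/N)^N = (O(n^2/d))^{nd}$ distinct sign patterns, hence at most that many realizable triplet orderings on $n$ points in $\mathbf{R}^d$.

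For the lower bound, I would build a family $\mathcal{F}$ of $n$-point metrics recursively. Split the points into halves $A,B$ of size $n/2$; place all mixed $A$--$B$ distances in an interval $[a,\tfrac{a+b}{2}]$ and all within-side distances in $[\tfrac{a+b}{2},b]$, with $b<2a$ so the triangle inequality holds automatically (the sum of any two mixed distances is at least $2a>b$ and therefore exceeds every within-side distance). The mixed distances form an $(n/2)\times(n/2)$ matrix $D$ with distinct entries, chosen freely, and the within-$A$ and within-$B$ metrics are constructed recursively on the shrunken interval $[\tfrac{a+b}{2},b]$. From any vertex's viewpoint, the ranking of the other $n-1$ points lists the opposite side first (ordered by the corresponding row or column of $D$) and then the same side (ordered by the recursive sub-metric), so independent choices at each level produce distinct triplet orderings. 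Letting each vertex of $A$ freely rank its $n/2$ counterparts in $B$ contributes a factor $((n/2)!)^{n/2}$, which yields the recursion
\[ T(n) \;\ge\; \bigl((n/2)!\bigr)^{n/2} \cdot T(n/2)^2. \]
Taking logarithms and applying Stirling's formula, one checks that $L(n) := \log T(n)$ satisfies $L(n) \ge \tfrac{n^2}{2}\log n - O(n^2)$, so $T(n) \ge (n/C)^{n^2/2}$ for some absolute constant $C>0$.

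Comparing the two bounds: if $d \le n/(2+\kappa)$ then $\log\bigl((O(n^2/d))^{nd}\bigr) \le \tfrac{n^2}{2+\kappa}\log n + O(n^2)$, whose leading $\log n$ coefficient $1/(2+\kappa)$ is strictly less than $1/2$, so for every sufficiently large $n$ the Warren upper bound is strictly smaller than $|\mathcal{F}|$, contradicting the hypothesis that every metric admits a $d$-dimensional triplet ordinal embedding. The main obstacle is the lower-bound construction: a single bipartite level alone yields only $((n/2)!)^{n/2}$ orderings and gives the weaker bound $d \ge n/4$; pushing the constant up to the optimal $1/2$ requires the multiplicative boost from recursing on both halves, and one must verify that the scale hierarchy $b<2a$ at every level simultaneously preserves the triangle inequality and keeps the sub-metric rankings cleanly separated from the mixed ones, so that distinct recursive choices never collapse to the same triplet ordering.
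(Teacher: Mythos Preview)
Your proposal is correct, and the upper-bound half is essentially identical to the paper's: both encode each triplet comparison as the sign of a degree-$2$ polynomial in the $nd$ coordinate variables and bound the number of realizable sign patterns (you cite Warren, the paper cites the~\cite{alon1985geometrical} variant with a free parameter $\beta$; these give the same $nd\log n$ leading exponent and hence the same threshold $c>2$).

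Where you diverge is the lower bound on the number of realizable triplet orderings. Your recursive bipartite construction and the recursion $T(n)\ge((n/2)!)^{n/2}\,T(n/2)^2$ do yield $\log T(n)\ge \tfrac{n^2}{2}\log n-O(n^2)$, and the scale hierarchy (all distances in $[a,b]$ with $b<2a$, intervals nesting at each level) cleanly separates the blocks so that the injection from (row orderings of $D$, recursive $A$-ordering, recursive $B$-ordering) to full triplet orderings goes through. But the paper reaches the same count far more directly: it observes that the number of triplet orderings is at least the superfactorial $G(n+1)=0!\,1!\cdots(n-1)!$, simply because item $1$ can rank $\{2,\dots,n\}$ in any of $(n-1)!$ ways, item $2$ can rank $\{3,\dots,n\}$ in any of $(n-2)!$ ways (independently of item $1$'s choices), and so on---all realized by picking distances greedily in a fixed interval $[1,2-\epsilon]$ so the triangle inequality is automatic. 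Since $\log G(n+1)\approx\tfrac{n^2}{2}\log n$, this matches your bound with no recursion needed. Your construction is a valid alternative, and it is somewhat more ``structural'' (it exhibits an explicit hierarchical family rather than a greedy one), but for the purposes of this theorem the paper's one-line superfactorial count is considerably simpler and avoids the bookkeeping you flag at the end of your proposal.
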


\begin{proof}
First, recall the superfactorial function $G(z)$ as defined on integers:

${\displaystyle G(n)={\begin{cases}0&{\text{if }}n=0,-1,-2,\dots \\\prod _{i=0}^{n-2}i!&{\text{if }}n=1,2,\dots \end{cases}}}$

This is Barne's special $G$-function which is related to the gamma function as $G(z+1)=\Gamma (z)\,G(z)$, with $G(1)=1$. Here we care about its asymptotic growth rate for integer $n$:
\begin{fact}\label{fact:triplets}
$G(n+1) = 0!1!\cdots(n-1)!$ and $\log G(n+1)={\frac {n^{2}}{2}}\log n+o\left({\frac {n^{2}}{2}\log n}\right) \approx$ $\frac {n^{2}}{2}\log n$. (The symbol $f(n)\approx g(n)$ simply means that the two quantities have asymptotically the exact same behavior with the exact same leading constant in front of the dominant term, where we can ignore lower order terms.)
\end{fact}
 Observe that the number of distinct triplet-orderings on $n\ge3$ items is at least as large as $G(n+1)$, since the first item can have any of the $(n-1)!$ permutations for its neighbors, the second item any of the $(n-2)!$ on its neighbors (excluding the first item) and so on. This is a good enough bound to prove the lower bound for the dimension as stated in the theorem.\footnote{We can compute the exact number of triplet orders, but the final lower bound for the dimension is asymptotically the same. The distinct triplet orders are $\tfrac{(n-1)!}{0!}\cdot \tfrac{n!}{2!}\cdot \tfrac{(n+1)!}{4!}\cdot \tfrac{(n+2)!}{6!} \cdots \tfrac{(2n-3)!}{(2n-4)!}$.}

\medskip
\noindent \textbf{Triplet-Preserving Embeddings as Low-degree Polynomials.} The second ingredient in the proof is to associate triplet-preserving embeddings with polynomials of degree $2$ and count how many different configurations could exist for resolving distance comparisons among $\delta(i,j)$ and $\delta(i,k)$. The embedding assigns $d$ coordinates to each of the $n$ elements so we can define the $n\times d$ matrix, say $X$, just by stacking row-by-row the coordinates of each of the $n$ elements; denote with $x_i$, the $i^{th}$ row of this matrix. We say that the matrix $X$ ``realizes the triplet order of $\delta$'' if the euclidean distances between pairs of rows $|\lvert x_i-x_j \rvert|$ and $|\lvert x_i-x_k \rvert|$ are consistent with $\delta$, i.e., if for every $i,j,k \in [n]$, we have:
\[
\delta(i,j)<\delta(i,k) \iff |\lvert x_i-x_j \rvert|<|\lvert x_i-x_k \rvert| 
\]
Notice that by squaring the euclidean distances, we can determine the inequality for a triplet $(i,j,k)$, by examining the sign of the following\footnote{With slight abuse of notation, we also use $X$ to denote the flattened version of the matrix, that is an $l$-dimensional variable with $l=n\cdot d$, and the resulting polynomials depend on (some) coordinates of $X$.} polynomial:
\[
p_{ijk}(X)\equiv p_{(i,j),(i,k)}(X)=||x_i-x_j||^2-||x_i-x_k||^2
\]
Notice how the sign of this polynomial determines the relative ordering for the distances among pair $(i,j)$ and pair $(i,k)$. In other words, there is a $1$-to-$1$ correspondence between the \textit{sign-patterns} of the polynomials $p_{ijk}(X)$ for $i,j,k\in [n]$ and the induced triplet orders.

\begin{definition}[Sign Patterns] Let $p_1,p_2,\ldots,p_m$ be real degree-$2$ polynomials over $l$ variables and let a point $u=(u_1,\ldots,u_l) \in \mathbb{R}^l$ be such that none of them vanishes. The \textit{sign-pattern} of the polynomials at point $u$ is the $m$-tuple $(\sigma_1,\ldots,\sigma_m)\in (-1,+1)^m$ where $\sigma_i=\mathrm{sign}(p_i(u))$. We use the notation $\mathrm{signs}(p_1,p_2,\ldots,p_m)$ to denote
the total number of different sign-patterns that can be obtained from $p_1,p_2,\ldots,p_m$ as point $u$ ranges over all points in $\mathbb{R}^l$.
\end{definition}
\begin{fact}[from~\cite{alon1985geometrical}]\label{fact:signs}
For any integer number $\beta$ between 1 and $m$, the total number of sign-patterns of  $m$ polynomials (as above) is upper bounded by:
\begin{equation}
\mathrm{signs}(p_1,p_2,\ldots,p_m)\le 4\beta\cdot(8\beta-1)^{l+\tfrac{m}{\beta}-1}\label{eq:signs}
\end{equation}
\end{fact}
\medskip
\textbf{Final Comparison: Orderings vs Signs.} The final step is a comparison. On the one hand, we already saw in Fact~\ref{fact:triplets} that $\log G(n+1)\approx \tfrac{n^2}{2}\log n$ and hence the number of distinct triplet orderings grows as $\tfrac{n^2}{2}\log n$. On the other hand, the number of polynomials is exactly $m=n\binom{n-1}{2}\approx n^3/2$ , because they are indexed by a triplet $(i,j,k)$ where distances between $(i,j)$ and $(i,k)$ are compared.

We set $n =c\cdot d$, for some constant $c$ (as we will see any $c>2$ suffices), and let the variables be $l = n \cdot d$ (every point gets assigned $d$ coordinates). We also set the parameter $\beta=\mu n=\mu cd$ for sufficiently large constant $\mu$. Taking the logarithm on both sides of~\eqref{eq:signs} (and ignoring the lower-order terms):
\[
\log(\mathrm{signs}(p_1,p_2,\ldots,p_m))\le (nd+\tfrac{n^3}{2\mu n})\log(8\mu n)
\]
The dominant term becomes $cd^2\log d$. Observe that for any $c>2$, this term is strictly smaller than  $\tfrac{n^2}{2}\log n$ and so for $n=cd \iff d=\tfrac{n}{c}<\tfrac n2$, there will be at least two distinct triplet-orderings that get mapped to the same sign pattern. This implies that any embedding that uses $d<n/2$ dimensions cannot be triplet-preserving (for worst-case instances), concluding the theorem. 
\end{proof}

\begin{remark}
Our results can be extended to other $\ell_p$ normed spaces too. For example, we can derive exactly the same lower bound for the dimension $d>\tfrac n2$ for any $\ell_p$ space (fixed $p\ge2$). The polynomials would have to be of degree $p$ instead of degree $2$;  their sign patterns are at most $2\beta p \cdot (4\beta p-1)^{l+\tfrac m\beta-1}$ ~(\cite{alon1985geometrical}), so $p$ affects only lower order terms.
\end{remark}

\subsection{Tradeoff for Triplet Dimension vs Relaxation}\label{subsec:non-exact}

\begin{theorem}\label{th:relax_triplet}
There is an absolute constant $c>0$ (taking $c=8$ suffices) such that for every integer $d$ and integer $n$, there is a metric space $T$ on $n$ points such that the \emph{triplet relaxation} of any ordinal embedding of $T$ into $d$-dimensional Euclidean space is at least $\tfrac{\log n}{\log d+\log\log n+c}-1$.
\end{theorem}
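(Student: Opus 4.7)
The plan is to extend the sign-pattern counting technique of Theorem~\ref{th:exact_triplet} to the relaxed setting by exhibiting a rich family of ``hard'' metric spaces obtained as subgraphs of a single dense high-girth graph.

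Concretely, I would set $g = \lfloor \log n / (\log d + \log\log n + c) \rfloor$ so that proving $\alpha \ge g - 1$ matches the claim. Using the Erd\H{o}s--Sauer existence result for dense high-girth graphs, I would fix a host graph $G$ on $n$ vertices with girth at least $2g+1$ and $m = \Omega(n^{1+1/g})$ edges, and consider the family $\mathcal{F} = \{G_S : S \subseteq E(G)\}$ of $2^m$ subgraphs, each equipped with its shortest-path metric $\delta_S$. The crucial consequence of the girth is a ``binary'' dichotomy: for every edge $e = (u,v) \in E(G)$, we have $\delta_S(u,v) = 1$ when $e \in S$, and $\delta_S(u,v) \ge 2g$ when $e \notin S$, because any alternative $u$-$v$ path together with $e$ would close a cycle of length at most $2g$, contradicting the girth.

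Assume for contradiction that every $G_S$ admits a triplet-preserving embedding $\phi_S : [n] \to \mathbb{R}^d$ of relaxation $\alpha < g - 1$. Associate to each $S$ the sign pattern $\sigma(S)$ of the triplet polynomials $p_{u,v,w}(X) = \|x_u - x_v\|^2 - \|x_u - x_w\|^2$ evaluated at $X = \phi_S$. The central lemma I would prove is a distinguishability statement: for a large fraction of pairs $S \neq S'$, the patterns $\sigma(S)$ and $\sigma(S')$ must differ. The cleanest case is when $S \triangle S'$ contains two incident edges $(u,v_1) \in S \setminus S'$ and $(u,v_2) \in S' \setminus S$; at the triplet $(u,v_1,v_2)$ the gap $1 < 2g$ forces $v_1$ to be embedded closer to $u$ under $\phi_S$ and $v_2$ under $\phi_{S'}$ (as long as $\alpha < 2g$, which is implied by $\alpha < g-1$), giving opposite signs. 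The ``bad'' pairs escaping this test --- those whose symmetric difference has each vertex-star entirely one-colored --- form equivalence classes whose sizes can be bounded by a matching-type count of at most $2^{O((n/g)\log n)}$.

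Concluding via counting, the number of distinct sign patterns realized by $\mathcal{F}$ is at least $2^{m - O((n/g)\log n)} = 2^{\Omega(n^{1+1/g})}$, while by Fact~\ref{fact:signs} applied with $l = nd$ variables, $O(n^3)$ triplet polynomials, and an optimized choice of $\beta$, the total number of realizable sign patterns in $\mathbb{R}^{nd}$ is at most $2^{O(nd \log(nd))}$. Requiring $n^{1+1/g} > nd \log(nd)$, equivalently $n^{1/g} > d \log(nd)$, rearranges to $g < \log n / (\log d + \log\log n + c)$ for an absolute constant $c$, yielding the contradiction. The hard part will be the distinguishability lemma itself: naive arguments with a single swapped edge and an intermediate witness $w$ at balanced distance $d^* \approx \sqrt{2g}$ yield only $\alpha = \Omega(\sqrt{g})$, since the two relaxation constraints $\alpha < d^*$ under $\delta_S$ and $\alpha d^* < 2g$ under $\delta_{S'}$ meet at a square root. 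Forcing two incident swapped edges sidesteps this loss but introduces the monochromatic-components bookkeeping that the counting must absorb.
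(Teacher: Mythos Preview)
Your plan is essentially the paper's own argument: build many shortest-path metrics from edge-subgraphs of a single dense high-girth graph, observe that whenever two subgraphs differ by a pair of \emph{incident} edges the corresponding triplet polynomial must flip sign under any embedding of relaxation below the girth, and finish with the sign-pattern bound of Fact~\ref{fact:signs}. The paper calls your ``two incident swapped edges'' condition a \emph{faraway pair} with a \emph{witness} vertex, and your distinguishability lemma is exactly its property~\eqref{eq:pairwise}.

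The one genuine difference is how you produce many pairwise-distinguishable metrics. The paper samples $N=2^{\Theta(m)}$ random subgraphs and shows by a union bound that, with high probability, \emph{every} pair is faraway. You instead keep all $2^m$ subgraphs and bound the size of any fiber of the sign-pattern map. That works too: if a family $\mathcal{B}$ has no faraway pair then for every vertex $v$ the stars $\{E_S(v):S\in\mathcal{B}\}$ form a chain, so $|\mathcal{B}|\le\prod_v(\deg_G(v)+1)\le 2^{O((n/g)\log n)}$ by concavity of $\log$, matching your claimed bound. This is not really a ``matching-type count''; the chain observation is what does the work. Either route feeds the same inequality $m\gtrsim nd\log n$ into the final comparison.

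Two small things to fix. First, asking for girth $2g{+}1$ together with $m=\Omega(n^{1+1/g})$ is asking for Moore-bound-tight graphs, which are not known to exist; the Sauer construction at girth $2g{+}1$ only gives $\Omega(n^{1+1/(2g)})$ edges. You don't need this: girth $g$ already forces $\delta_S(u,v)\ge g-1$ when the edge is absent, and $\alpha<g-1$ is exactly the hypothesis you contradict. Second, ``equivalence classes'' is loose (the not-faraway relation is not transitive); what you actually need, and what the chain argument gives, is a bound on the size of any set of subgraphs that is pairwise not-faraway.
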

\begin{proof}
Let $G=(V,E)$ be a high-girth graph whose girth is $g=\tfrac{\log n}{\log d+\log\log n+8}$. We can assume that the number of edges $m=|E|\ge \tfrac14n^{1+\tfrac1g}>16nd\log n$, by known constructions~(\cite{sauer1970existence}). We will construct a large number $N$ of edge subgraphs of $G$, namely $G_1,G_2,\ldots,G_N$ (all have vertex set $V$), with the following crucial property:
\begin{equation}\label{eq:pairwise}
  \text{ For every pair } G_i,G_j, \exists v\in V: E_{G_i}(v)\setminus E_{G_j}(v)\neq \emptyset \text{ and } E_{G_j}(v)\setminus E_{G_i}(v)\neq \emptyset\tag{*}  
\end{equation}
Here $E_{G_i}(v)$ denotes the set of edges incident to $v$ within $G_i$. Property~\eqref{eq:pairwise} captures that from the perspective of vertex $v$, the subgraphs $G_i,G_j$ are significantly different.
\begin{definition}[Witness, ``Faraway'' Metric Pair]
For a pair of subgraphs $G_i,G_j$, we say that a vertex $v$ is a \emph{witness} (of faraway metric pair) $G_i,G_j$ if the following two relations hold: $E_{G_i}(v)\setminus E_{G_j}(v)\neq \emptyset \text{ and } E_{G_j}(v)\setminus E_{G_i}(v)\neq \emptyset$. The pair $G_i,G_j$ is said to be a \emph{faraway metric pair} if there exists such a witness vertex $v$.
\end{definition}

Restating property~\eqref{eq:pairwise}, we shall say that every pair $G_i,G_j$ is faraway, as such a pair differs significantly in their triplets from at least the perspective of the vertex $v$. Our next goal becomes how to generate lots of faraway metric pairs. We independently sample edges of $G$ with probability $\tfrac12$ and generate $N$ subgraphs $G_1,G_2,\ldots,G_N$. As we will see, we will pick $N=2^{bm}$ for $b<\tfrac12 \log_2\left(4/3\right)$.

\begin{lemma}
The set of subgraphs $G_1,G_2,\ldots,G_N$ generated as above satisfies property~\eqref{eq:pairwise} with high probability.
\end{lemma}
\begin{proof} Let us fix a pair $G_i,G_j$ and compute the probability it is not a faraway pair. Just for now, let us assume that the graph $G$ was a $k$-regular graph (every vertex had degree exactly $k$). This simplifies the exposition and we show how to drop this assumption next. Fix a vertex $v$. We have:
\[
\Pr[v \text{ is not witness for }G_i,G_j ]=\Pr\left[E_{G_i}(v)\setminus E_{G_j}(v)= \emptyset \text{ or } E_{G_j}(v)\setminus E_{G_i}(v)= \emptyset\right]
\]
Each of the two events has probability $\left(\tfrac{3}{4}\right)^k$ and to see why this is the case let's consider the first event $E_{G_i}(v)\setminus E_{G_j}(v)= \emptyset$ (the other event follows in the same way). Consider an edge $e=(v,w)$. If $e\in E_{G_i}(v)$ and $e\in E_{G_j}(v)$, or if $e\notin E_{G_i}(v)$, then $E_{G_i}(v)\setminus E_{G_j}(v)= \emptyset$. The latter events are disjoint and their union has probability $\tfrac12\cdot\tfrac12+\tfrac12=\tfrac34$, since each edge was independently included w.p. $\tfrac12$. This means that:
\[
\Pr[v \text{ is not witness for }G_i,G_j ]\le 2\cdot \left(\tfrac{3}{4}\right)^k
\]
At this point, notice that we would almost be done with the proof of the lemma, if we could somehow ensure that the event of whether or not a vertex $v$ is a witness for $G_i,G_j$ was independent from other vertices $v'$ being witnesses for $G_i,G_j$. This is because we could bound the probability $\Pr[\text{all } v \in V \text{ are not witnesses for }G_i,G_j]\le2^n\left(\tfrac{3}{4}\right)^{kn}$ and using a union bound over the $N^2$ pairs of subgraphs, we would get the lemma. Unfortunately, the independence of witnesses does not hold (as edges share endpoints), nor does the $k$-degree regularity assumption.

Let us now show how to circumvent the regularity and the independence assumptions. We will use a sequential process to determine whether a vertex $v$ is a witness or not, where we have to redefine the notion of a vertex neighborhood as we sequentially process the edges. Let us fix an ordering over the vertices of the graph $v_1,v_2,\ldots,v_n$. Let's redefine the neighborhoods of each vertex in an incremental way as follows:
\begin{itemize}
    \item First vertex $v_1$ gets all edges $E_G(v_1)$. Denote this set with $E_G'(v_1)$.
    \item Second vertex $v_2$ gets all edges in $E_G(v_2)$ except those already selected by $v_1$, i.e., all edges $E_G(v_2)\setminus E_G(v_1)$. Denote this set with $E_G'(v_2)$.
    \item Third vertex $v_3$ gets all edges $E_G(v_3)\setminus \left(E_G(v_1)\cup E_G(v_2)\right)$, denoted as $E_G'(v_3)$. We continue until the end, where the last vertex $v_n$ has no edges left.
\end{itemize}
Notice that each edge of $G$ contributes to exactly one neighborhood (the first one it is part of), so $\sum_{v\in V}|E_G'(v)|=m$, where $m$ is the total number of edges in $G$. With this new definition, we can indeed write:
\[
\Pr[\text{all } v \text{ are not witnesses for }G_i,G_j]\le 2\left(\tfrac34\right)^{|E_G'(v_1)|}\cdot 2\left(\tfrac34\right)^{|E_G'(v_2)|}\cdots
2\left(\tfrac34\right)^{|E_G'(v_n)|}\le2^n\left(\tfrac34\right)^{m}
\]
Finally, we take a union bound over the $N^2$ subgraphs to show that they satisfy property~\eqref{eq:pairwise} with high probability:
\[
\Pr[\text{there exists pair } G_i,G_j \text{ that is not faraway}]\le \binom{N}{2}2^n\left(\tfrac34\right)^{m}
\]
This bound goes to $0$, if we pick $N^2 \ll (\tfrac{4}{3})^m$ (e.g., $N=1.14^m$ suffices).  This concludes the proof of the lemma.
\end{proof}



In order to conclude the proof of the theorem, we still need to compare the number $N=1.14^m>1.14^{16nd\log n}>8^{nd\log n}$ of faraway metrics we generated, with the total number of sign patterns of polynomials given in Fact~\ref{fact:signs}. Again, we care about triplet distances so there are $n^3/2$ polynomials of degree $2$ over $dn$ variables, yielding a number of sign patterns $2^{(2+o(1))nd\log n}\ll 8^{nd\log n}$. 

By the pigeonhole principle, two distinct
metric spaces from our collection of $N$ subgraphs, get mapped to the same sign pattern so the distance orders in their embeddings are the same. Given that the graph has girth $g$ and using~\eqref{eq:pairwise}, this implies the relaxation in at least one of these
embeddings is at least $g-1$, completing the proof.
\end{proof}

\section{Terminal Ordinal Embeddings}\label{sec:terminal}
In this section, we present tight bounds for  \textit{terminal ordinal embeddings} (see Definition~\ref{def:terminal}), which is analogous to the notion of terminal  embeddings studied recently in metric embeddings~(\cite{elkin2017terminal,mahabadi2018nonlinear,narayanan2019optimal,cherapanamjeri2022terminal}). 

\subsection{Upper Bound for $k$ terminals}
Let $T=\{t_1,\ldots, t_k\}$ be the set of terminals, and $V\setminus T=\{v_1,\ldots, v_{n-k}\}$ be the set of the other points. For each pair $t \in T, v \in V$, let $r(t,v)$ be a unique integer in the range $1,\ldots,kn$ that specifies the rank of the distance between $t$ and $v$ among all $kn$ such distances. 

\begin{theorem}\label{th:upper_terminal}
Terminal ordinal embedding for $k$ terminals can be done with $k$ dimensions.
\end{theorem}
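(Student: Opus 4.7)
The plan is to give an explicit coordinate-by-coordinate embedding into $\mathbf{R}^k$ in which each terminal is placed far out along its own axis, while every non-terminal is placed at a point whose coordinates encode the global distance ranks to the terminals. Concretely, pick a large parameter $M$ (to be fixed later) and set $\phi(t_i) := M\, e_i$ for each terminal $t_i \in T$, where $e_1, \ldots, e_k$ denotes the standard basis of $\mathbf{R}^k$. For each non-terminal $v \in V \setminus T$, set its $j$-th coordinate to $\phi(v)_j := -\,r(t_j, v)$. Intuitively, moving along the axis $e_i$ changes the distance to terminal $t_i$ linearly in $r(t_i, v)$, while changes along the other axes produce only a perturbation that does not scale with $M$.

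The verification reduces to a single squared-distance computation. For any terminal $t_i$ and non-terminal $v$,
\[
\|\phi(t_i) - \phi(v)\|^2 \;=\; \bigl(M + r(t_i, v)\bigr)^2 + \sum_{j \neq i} r(t_j, v)^2 \;=\; M^2 + 2M \, r(t_i, v) + \Delta(v),
\]
where $\Delta(v) := \sum_{j=1}^k r(t_j, v)^2$ depends only on $v$. Hence for any two terminal-to-nonterminal pairs $(t_i, v)$ and $(t_{i'}, w)$ (which covers the case $t_i = t_{i'}$ that Definition~\ref{def:terminal} explicitly includes), the difference of squared embedded distances equals
\[
2M\,\bigl(r(t_i, v) - r(t_{i'}, w)\bigr) + \bigl(\Delta(v) - \Delta(w)\bigr).
\]

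The final step is to choose $M$ large enough that the linear term dominates. Since the ranks are distinct integers in $\{1, \ldots, kn\}$, we have $|r(t_i, v) - r(t_{i'}, w)| \ge 1$ whenever the two pairs differ, while $|\Delta(v) - \Delta(w)| \le k\,(kn)^2 = k^3 n^2$. Thus taking, for instance, $M := k^3 n^2$ forces the sign of the difference above to match the sign of $r(t_i, v) - r(t_{i'}, w)$, which by the definition of $r$ coincides with the sign of $\delta(t_i, v) - \delta(t_{i'}, w)$. This yields the required bi-implication in Definition~\ref{def:terminal}. I do not see a substantial obstacle here; the only mild subtlety is noticing that the definition only constrains terminal-to-nonterminal orderings, so we are free to push the terminals out to the ``coordinate-dominating'' positions $M\,e_i$ without worrying about how terminal-to-terminal distances behave in the image.
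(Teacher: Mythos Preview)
Your proof is correct and is essentially identical to the paper's own argument: the paper also places each terminal at $\pm M e_i$ and each non-terminal at the vector of its ranks $r(t_j,v)$, then performs the same squared-distance expansion and chooses $M = k^3 n^2$ so that the $2M\,r(t,v)$ term dominates the rank-square sum. The only cosmetic difference is a global sign (the paper uses $f(t_i)=-M e_i$ and $f(v)_j = r(t_j,v)$, whereas you use $\phi(t_i)=M e_i$ and $\phi(v)_j=-r(t_j,v)$), which of course leaves all distances unchanged.
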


\begin{proof}
Our embedding (into a $k$-dimensional space) is as follows:
\begin{itemize}
    \item Each terminal $t_i \in T$ is mapped to $f(t_i)=-M e_i$, where $M$ is a ``large'' number to be specified soon, and $e_i$ has $1$ on the $i^{th}$ position and $0$ elsewhere.
    \item Each vertex $v$ is mapped into a $k$-dimensional point
 $f(v)=[ r(t_1,v), r(t_2,v),\ldots,r(t_k,v)]$.
\end{itemize}

\begin{lemma}
For large enough $M$ (picking $M=k^3 n^2$ suffices), our mapping preserves the distance orders between any pairs  $(t,v)$ and $(t',v')$.
\end{lemma} 
\begin{proof}
To see this, observe that
\[
||f(v)-f(t)||^2 = \sum_{s\neq t} r(s,v)^2 + \left(r(t,v)+M\right)^2 = \sum_s r(s,v)^2 + 2r(t,v)M + M^2
\]
Let $C(t,v)=\sum_s r(s,v)^2$, and observe that $C(t,v)\le k\cdot(kn)^2=k^3 n^2$. 

Thus, if we set $M=k^3 n^2$, then the term $2r(t,v)M+M^2$ ``dominates'', i.e., $||f(v)-f(t)||^2 < ||f(v')-f(t')||^2$ if and only if $r(t,v)<r(t',v')$.
\end{proof}
 So our embedding preserves the correct relationship between all pairs between $T$ and the rest of the points, and only uses $k$ dimensions.
\end{proof}

\begin{remark}
Note that our construction did not explicitly use that $k$ is fixed so it works for larger values of $k$ too. As we show next, for values of $k=o(n)$ this is tight, and for $k=\Theta(n)$ this is tight up to a constant of $2$ compared to the lower bound. Moreover, our approach works for other $\ell_p$ norms by choosing appropriate $M$.
\end{remark}

\subsection{Lower Bound for $k=\Theta(n)$ Terminals}

We first analyze the case where there are many items playing the role of terminals, specifically $k=\lambda n$ with $\lambda\in (0,1]$. We prove the following:

\begin{theorem}\label{th:exact_terminal}
For every constant $\lambda\in (0,1]$, and for every large enough $n$, no $d$-dimensional embedding in $\ell_2$ can be a terminal ordinal embedding for $k=\lambda n$ terminals, unless its dimension grows linearly as $d > \frac {n}{c}$, where $c>\tfrac{2}{2\lambda-\lambda^2}$. Equivalently, dimension $d>k(1-\tfrac{\lambda}{2})$ is needed for any terminal ordinal embedding on $k=\lambda n$ terminals. 
\end{theorem}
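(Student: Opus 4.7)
The plan is to adapt the sign-pattern counting argument from Theorem~\ref{th:exact_triplet} to the terminal setting. The framework is the same: compare a combinatorial lower bound on the number of distinct realizable \emph{terminal orderings} (orderings of the $k(n-k)$ terminal-to-non-terminal distances that arise from some metric on $n$ points) against an upper bound on the number of sign patterns achievable by a $d$-dimensional embedding, then invoke Fact~\ref{fact:signs}.

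For the orderings count, I would mirror the sequential Barnes $G$-function argument behind Fact~\ref{fact:triplets}: processing the terminals in some order $t_1,\ldots,t_k$, terminal $t_i$ gets $(n-i)!$ ranking choices with respect to the $n-i$ items placed after it, yielding at least $\prod_{i=1}^{k}(n-i)! = (n-1)!(n-2)!\cdots(n-k)!$ distinct realizable terminal orderings. A Stirling/integral estimate then gives
\begin{equation*}
    \log\!\left(\prod_{i=1}^{k}(n-i)!\right) \;\approx\; \int_{n-k}^{n} x \log x\, dx \;\approx\; \Big(nk-\tfrac{k^2}{2}\Big)\log n \;=\; \lambda\Big(1-\tfrac{\lambda}{2}\Big)\,n^2\log n.
\end{equation*}
On the sign-pattern side, the embedding has $l=nd$ coordinate variables, and each relevant comparison corresponds to the degree-$2$ polynomial $p_{(t,x),(t',x')}(X)=\|x_t-x_x\|^2-\|x_{t'}-x_{x'}\|^2$, giving $m=\binom{k(n-k)}{2}\approx \tfrac12\,\lambda^2(1-\lambda)^2\, n^4$ polynomials. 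Applying Fact~\ref{fact:signs} with $\beta=\mu n$ for a suitably chosen $\mu$ (allowed to grow mildly with $n$, since $\beta\le m$ and $m\gg n$), the dominant contribution to the log of the sign-pattern count becomes $(1+o(1))\cdot nd\log n$.

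Setting $d=n/c$ and requiring sign patterns to cover all realizable orderings then yields $\tfrac{n^2}{c}\log n\gtrsim \lambda(1-\tfrac{\lambda}{2})\,n^2\log n$, i.e., $c\le 2/(2\lambda-\lambda^2)$. Contrapositively, for any $c>2/(2\lambda-\lambda^2)$ the dimension $d<n/c$ produces a contradiction, giving the claimed $d>k(1-\lambda/2)$.

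The main obstacle is justifying the $\prod_{i=1}^{k}(n-i)!$ lower bound on realizable \emph{terminal} orderings. Because the terminal embedding only constrains terminal-to-non-terminal comparisons (not terminal-to-terminal), it is not immediate that distinct sequential rankings in the construction above induce distinct restricted orderings. I would handle this with a careful tracking argument: whenever two sequential choices differ in terminal $t_i$'s ranking of some non-terminal $v_j$ relative to a later terminal $t_{i'}$, this propagates (once $t_{i'}$ is placed and its distances to non-terminals are fixed) to a distinguishing comparison of the form $\delta(t_i,v_j)\lessgtr \delta(t_{i'},v_{j'})$ in the terminal profile. This parallels the classical combinatorial argument that yields $G(n+1)$ in the triplet setting.
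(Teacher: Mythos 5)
Your overall strategy---compare a lower bound on realizable terminal orderings against an upper bound on sign patterns via Fact~\ref{fact:signs}---is exactly the paper's. But there are two numerical slips, and they happen to cancel, so you reach the right constant for the wrong reasons.

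First, the ordering count. The paper does \emph{not} use a Barnes-$G$-style product of factorials here; it uses the straight factorial $D!$ where $D = kn - k^2/2$ is the number of distinct terminal-incident distances. This is correct because, unlike the triplet setting (where triplet order is only a ``local'' partial ordering with a common first endpoint), a terminal ordinal embedding must respect the \emph{total} order among all $D$ distances, and every one of the $D!$ total orders is realizable by some metric (assign values near $1$ so the triangle inequality is automatic). So $\log(\#\text{orderings}) \approx D \log D \approx \left(kn - \tfrac{k^2}{2}\right) \cdot 2\log n = 2\lambda(1-\lambda/2)\,n^2\log n$. Your $\prod_{i=1}^k (n-i)!$ is a valid but needlessly weak lower bound, giving only $\lambda(1-\lambda/2)\,n^2\log n$, a factor $2$ too small in the log; your ``obstacle'' paragraph and the tracking argument are simply unnecessary once you adopt $D!$. (It is the \emph{triplet} theorem, not this one, that requires the $G$-function-type count.)

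Second, the sign-pattern count. You claim $\log(\text{signs}) \approx (1+o(1))\,nd\log n$ with $\beta = \mu n$ and $\mu$ growing ``mildly.'' This does not work out: here $m = \Theta(n^4)$, so for the $m/\beta$ term in Fact~\ref{fact:signs} to be dominated by $l = nd = \Theta(n^2)$ you must take $\beta = \Omega(n^2)$. But then $\log(8\beta - 1) \ge 2\log n + O(1)$, and the bound is actually
\begin{equation*}
\log(\text{signs}) \;\lesssim\; \left(nd + \frac{m}{\mu n^2}\right)\cdot 2\log n \;\approx\; \frac{2n^2}{c}\log n,
\end{equation*}
which is what the paper gets by setting $\beta = \mu n^x$ with $x=2$. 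Your missing factor of $2$ here exactly offsets the missing factor of $2$ in your ordering count, which is why your final inequality $c > 2/(2\lambda - \lambda^2)$ agrees with the theorem. If you had computed either side correctly while keeping the other unchanged, the constants would no longer match: the tighter $D!$ count together with the correct $2nd\log n$ gives the claimed $d > k(1-\lambda/2)$; your undercount together with the correct $2nd\log n$ would only give $d > k(1-\lambda/2)/2$.

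Both fixes are local; the structure of your argument otherwise matches the paper.
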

For $\lambda=1$, we recover as a special case a known result of Bilu and Linial for monotone maps~\cite{bilu2005monotone}.

\begin{proof}
We have $k$ terminals, so the total number of distinct distances to-be-preserved is:
\[
\text{Distinct Distance Elements} = (n-1) + (n-2) +\ldots + (n-k) \approx kn-\tfrac{k^2}{2}
\]
There are roughly $\left(kn-\tfrac{k^2}{2}\right)!$ different orderings, so taking logarithms we have that:
\begin{equation}\label{eq:term_order}
 \log (\#\text{orderings}) \approx \left(kn-\tfrac{k^2}{2}\right)\log\left(kn-\tfrac{k^2}{2}\right)
\end{equation}

On the other hand, the number of paired distances we can compare is:
\[
m= \text{Paired Terminal Distances}= \binom{kn-\tfrac{k^2}{2}}{2}\approx\tfrac{k^2n^2-k^3n+\tfrac{k^4}{4}}{2}
\]
For each paired terminal distance we associate a degree $2$ polynomial on at most $nd$ variables, the sign of which determines the outcome of the distance comparison. The logarithm of different sign patterns that can arise is bounded by:
\begin{equation}\label{eq:term_signs}
  \log(\mathrm{signs}(p_1,p_2,\ldots,p_m))\approx \left(nd+\frac{m}{\mu n^x}\right)\log n^x
\end{equation}
where $\mu$ can be chosen to be a large constant, and $\mu n^x\in [1,m]$. We can determine when the quantity in \eqref{eq:term_order} is asymptotically larger that the quantity in \eqref{eq:term_signs}: after setting $n=cd$ (for constant $c$), parameter $x=2$ and choosing $\mu$ large enough, we ultimately get (for sufficiently large $n$):
\[
\left(\lambda n^2-\frac{\lambda^2n^2}{2}\right)\log\left(\lambda n^2-\frac{\lambda^2n^2}{2}\right) \gg \left(cd^2+\frac{k^2n^2-k^3n+\tfrac{k^4}{2}}{2\mu n^2} \right)\log d^2
\]
The notation $\gg$ simply means that the dominant term in the left-hand side becomes larger than the dominant term in the right-hand side, for $n$ chosen sufficiently large. The above happens for:
\[
c^2d^2\left(\lambda-\frac{\lambda^2}{2}\right)2\log d > cd^2 2\log d \iff c > \frac{2}{2\lambda-\lambda^2}
\]
which finishes the proof of the theorem.
\end{proof}

\subsection{Lower Bound for $k=o(n)$ Terminals}\label{th:lower_terminal}

The difference from the previous subsection is that here the number of terminals is sublinear in $n$ (the total number of items). Interestingly, for technical reasons that will become clear very soon, there is a qualitative (and quantitative) difference compared to when $k=\Theta(n)$, that allows us to get tighter bounds compared to previously. Specifically, our lower bound for the sublinear case shows that if the dimension is $d< (1-\epsilon) k$ for any $\epsilon>0$, then the embedding will mess up some of the distance orders. Hence, this is a tighter bound compared to Theorem~\ref{th:exact_terminal}, where a term $\lambda/2$ needs to be subtracted.

\begin{theorem}\label{th:lower_sub}
Let $k=\sqrt{n}$. For any constant $c>1$ and dimension $d<\tfrac kc=\tfrac{\sqrt n}{c}$, no $d$-dimensional embedding can be a terminal ordinal embedding. Generally, this lower bound holds for any $k=\lambda n^{1-\beta}$ for $\lambda \in (0,1]$ and $\beta\in (0,1]$. Notice that $\beta$ is stricly larger than $0$, so $k$ is sublinear in $n$.
\end{theorem}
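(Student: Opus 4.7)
The plan is a counting argument analogous to Theorems~\ref{th:exact_triplet} and~\ref{th:exact_terminal}: compare the number of realizable terminal-incident distance orderings against the number of sign patterns of the relevant comparison polynomials. The sublinearity assumption $k = o(n)$ is exactly what lets me sharpen the Theorem~\ref{th:exact_terminal} constant from $(1-\lambda/2)$ to $(1-o(1))$, yielding the claimed threshold $d \geq k/c$ for every $c>1$.

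\textit{Step 1 (orderings).} The embedding must preserve the rank order of the $M = k(n-k)+\binom{k}{2}$ terminal-incident distances. Since $k = \lambda n^{1-\beta}$ with $\beta>0$ implies $k^2 = o(kn)$, we have $M = kn(1-o(1))$. Exactly as in Theorem~\ref{th:exact_terminal}, at least $M!$ such orderings are realizable by metric spaces, so
\[
\log(\#\text{orderings}) \;=\; (2-\beta)\,\lambda\, n^{2-\beta}\log n \cdot (1+o(1)).
\]

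\textit{Step 2 (sign patterns).} Each pair of terminal-incident distances corresponds to one degree-$2$ polynomial in the $l = nd$ embedding coordinates, for a total of $m = \binom{M}{2} = \tfrac12 k^2 n^2 (1-o(1))$ polynomials. I apply Fact~\ref{fact:signs} with parameter $\beta' = \mu\, n^{2-\beta}$, where $\mu>0$ is a large constant to be fixed in Step 3; this is the key departure from Theorem~\ref{th:exact_terminal}, whose proof uses exponent $2$ instead of $2-\beta$. Then $\log(8\beta') = (2-\beta)\log n + O(1)$ and
\[
\frac{m}{\beta'} \;=\; \frac{k^2 n^2}{2\mu\, n^{2-\beta}} \;=\; \frac{\lambda^2\, n^{2-\beta}}{2\mu},
\]
which has the same asymptotic order as $nd = \lambda n^{2-\beta}/c$ once $d = k/c$. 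Consequently
\[
\log(\#\text{sign patterns}) \;\leq\; \lambda\!\left(\frac{1}{c}+\frac{\lambda}{2\mu}\right)(2-\beta)\,n^{2-\beta}\log n \cdot (1+o(1)).
\]

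\textit{Step 3 (comparison).} Given any $c > 1$, pick $\mu$ large enough that $\tfrac{1}{c} + \tfrac{\lambda}{2\mu} < 1$ (e.g.\ $\mu > \tfrac{\lambda c}{2(c-1)}$). Then the ratio $\log(\#\text{orderings})/\log(\#\text{sign patterns})$ exceeds a constant strictly greater than $1$ for all sufficiently large $n$, and by pigeonhole two distinct orderings must collide on the same sign pattern --- contradicting the existence of a terminal ordinal embedding of dimension $d < k/c$.

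\textit{Main obstacle.} The delicate step is the scaling $\beta' = \mu\, n^{2-\beta}$ rather than $\mu\, n^2$. With the ``default'' $\mu n^2$ used in Theorem~\ref{th:exact_terminal}, one obtains $m/\beta' = \lambda^2 n^{2-2\beta}/(2\mu)$, strictly smaller than $nd \sim n^{2-\beta}$ for $\beta>0$, which wastes a factor of $2$ in $\log(8\beta')$ and only recovers the weaker bound $d \geq k(1-\beta/2)$ (for $k=\sqrt n$, merely $d \geq 3k/4$). Matching the two scales via the exponent $2-\beta$ is precisely what collapses the constant to $1$, and this matching is only possible in the sublinear regime where $k^2 \ll kn$; in the linear-$k$ regime of Theorem~\ref{th:exact_terminal}, the $k^2/2$ correction in the ordering count itself prevents the same collapse.
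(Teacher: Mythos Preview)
Your proposal is correct and follows essentially the same approach as the paper's proof: both compare the $\approx kn\log(kn)$ ordering count against the sign-pattern bound from Fact~\ref{fact:signs}, and both make the same pivotal parameter choice $\beta'=\mu n^{2-\beta}$ (the paper phrases this as ``pick $x$ such that $n^x\approx kn$''), which is exactly what your ``Main obstacle'' paragraph isolates as the crux. Your added commentary explaining why the default choice $\mu n^2$ from Theorem~\ref{th:exact_terminal} loses a constant, and why the sublinearity $k^2=o(kn)$ is what enables the sharper scaling, is a nice articulation of a point the paper leaves implicit.
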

\begin{proof} The proof is found in the Appendix~\ref{app:proofs}. 
\end{proof}

\subsection{Lower Bound for Terminal Embeddings without Inter-terminal Comparisons}
Here we show tightness of our $k$ terminal ordinal embeddings, even if we ignore the inter-terminal distance pairs $(t,v)$ vs $(t',v'), t\neq t'$. 
\begin{theorem}\label{th:no-inter-terminal}
Even ignoring inter-terminal distance pairs (comparisons between $(t,v)$ and $(t',v'), t\neq t'$), for the case of $k=\lambda n$ the dimension needs to grow as $d>k(1-\tfrac{\lambda}{2})$ for any terminal ordinal embedding on $k=\lambda n$ terminals. For range of $k=\lambda n^{1-\beta}$ with $\lambda \in (0,1]$ and $\beta\in (0,1]$, no $d$-dimensional embedding for $d<\tfrac kc=\tfrac{\sqrt n}{c}$ can be a terminal ordinal embedding. 
\end{theorem}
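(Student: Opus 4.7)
The plan is to recycle the polynomial sign-pattern counting framework from the proofs of Theorems~\ref{th:exact_terminal} and~\ref{th:lower_sub}, now restricting attention to only the intra-terminal comparison polynomials. The three conceptual steps are: (i) lower bound the number of distinct intra-terminal ordering patterns that any valid embedding must distinguish; (ii) upper bound, via Fact~\ref{fact:signs}, the number of sign patterns achievable by the corresponding degree-$2$ polynomials over the coordinate variables; (iii) invoke pigeonhole in each of the two regimes $k=\lambda n$ and $k=\lambda n^{1-\beta}$.

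For step (i), since the ordering of the $n-k$ distances from each terminal $t$ to the non-terminals is decoupled from the orderings at the other terminals (precisely because we are ignoring inter-terminal comparisons), the number of distinct intra-terminal ordering patterns is at least $((n-k)!)^k$, giving $\log\#\text{orderings}\approx k(n-k)\log(n-k)$. For step (ii), I would associate to each triple $(t,v,v')$ with $t\in T$ and $v\neq v'\in V\setminus T$ the polynomial $p_{t,v,v'}(X)=\|x_t-x_v\|^2-\|x_t-x_{v'}\|^2$; there are $m=k\binom{n-k}{2}$ of them, each of degree $2$ in the $l=nd$ flattened coordinate variables. Applying Fact~\ref{fact:signs} with the auxiliary parameter $\beta$ tuned so that $m/\beta$ balances against $l$ (the same template as in Theorems~\ref{th:exact_terminal} and~\ref{th:lower_sub}) yields $\log\#\text{signs}\lesssim l\log(m/l)$.

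For step (iii), I would carry out the comparison regime by regime. In the sublinear regime $k=\lambda n^{1-\beta}$, setting $d=k/c$ forces $n\approx(cd/\lambda)^{1/(1-\beta)}$, and under this scaling both $\log\#\text{orderings}$ and $\log\#\text{signs}$ grow like $\Theta(d^{(2-\beta)/(1-\beta)}\log d)$ with the orderings exceeding the signs by a factor of $c$ at leading order; consequently any $c>1$ yields a pigeonhole contradiction, ruling out $d<k/c$. In the linear regime $k=\lambda n$ with $n=cd$, an analogous balancing would aim to reproduce the same threshold $c>2/[\lambda(2-\lambda)]$ derived in Theorem~\ref{th:exact_terminal}. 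The main obstacle is precisely this linear-regime constant: dropping inter-terminal comparisons shrinks the orderings count (by roughly $(2-\lambda)/[2(1-\lambda)]$ relative to the full case) and simultaneously shrinks the polynomial count $m$, and one must tune $\beta$ in Fact~\ref{fact:signs} carefully so that these two reductions cancel out and leave the critical threshold on $c$ unchanged, thereby recovering $d>k(1-\lambda/2)$ rather than a weaker bound.
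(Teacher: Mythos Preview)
Your overall strategy is the paper's own: count orderings, count sign patterns via Fact~\ref{fact:signs}, and pigeonhole. For the sublinear regime $k=\lambda n^{1-\beta}$ your argument is essentially identical to the paper's and goes through, since both $k(n-k)\log(n-k)$ and the paper's $(kn-\tfrac{k^2}{2})\log(n-k)$ have the same leading term $kn\log n$ when $k=o(n)$.

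The gap is in the linear regime $k=\lambda n$, and it is \emph{not} where you think it is. Your hope is that the two reductions (fewer orderings, fewer polynomials) will cancel after tuning the auxiliary parameter in Fact~\ref{fact:signs}. They will not: the sign-pattern bound is governed by the $l=nd$ term, which is completely insensitive to $m$ once $m/\beta$ has been pushed below $l$; shrinking $m$ further buys you nothing on that side. So with your ordering count $((n-k)!)^k$ you get $\log\#\text{orderings}\approx k(n-k)\log n=\lambda(1-\lambda)n^2\log n$, and comparing against $nd\cdot 2\log n$ yields only $c>1/[\lambda(1-\lambda)]$, i.e.\ $d>k(1-\lambda)$ rather than the claimed $d>k(1-\tfrac{\lambda}{2})$.

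What the paper actually does is sharpen the \emph{orderings} side, not the sign-pattern side. Instead of $((n-k)!)^k$ it uses the sequential count $(n-1)!(n-2)!\cdots(n-k)!$ (the same device as in Theorem~\ref{th:exact_triplet}): terminal $1$ freely orders its $n-1$ neighbors, terminal $2$ freely orders the $n-2$ neighbors excluding terminal $1$, and so on. This gives $\log\#\text{orderings}\approx (kn-\tfrac{k^2}{2})\log(n-k)$, with leading coefficient $\lambda-\tfrac{\lambda^2}{2}$ rather than $\lambda(1-\lambda)$, and \emph{that} is what recovers the threshold $c>2/(2\lambda-\lambda^2)$, i.e.\ $d>k(1-\tfrac{\lambda}{2})$. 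Note this count implicitly lets each terminal's ordering range over all $n-1$ other points (including the other terminals), consistent with how the paper already counted ``distinct distance elements'' as $(n-1)+(n-2)+\cdots+(n-k)$ in Theorem~\ref{th:exact_terminal}. Your decoupled count $((n-k)!)^k$ is a valid lower bound but simply too loose by exactly the factor that matters for the constant.
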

\begin{proof} The proof is found in the Appendix~\ref{app:proofs}. 
\end{proof}

%


\begin{remark}
Our construction in Theorem~\ref{th:upper_terminal} is tight for $k=o(n)$: no less than $(1-\epsilon)k$ dimensions can perform $k$ terminal embeddings. For linearly many terminals $k=\lambda n$, there is still a small gap between upper and lower bound: upper is  that $d=k$ dimensions suffice, lower is that at least $d>k(1-\tfrac\lambda2)$ dimensions are necessary.
\end{remark}

\section{Top-$k$-NNs Ordinal Embeddings}\label{sec:top-k}
Let $\mathrm{NN}(i)$ denote the set of $k$ nearest neighbors to point $i$ according to the original distance $\delta$. The top-$k$-NN ordinal embedding (see Definition~\ref{def:top}) asks to preserve the ordering on the set of distances for $\mathrm{NN}(i)$, for every $i$.

\begin{theorem}\label{th:topNNs}
Any top-$k$-NNs ordinal embedding needs to have a dimension that grows as $d=\Omega\left(k \right)$. Without mixed comparisons (Definition~\ref{def:top}), the bound becomes $d=\Omega\left(\tfrac{k\log k}{\log n} \right)$.
\end{theorem}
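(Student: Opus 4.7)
The plan is to adapt the sign-patterns vs.\ orderings counting framework used in Theorems~\ref{th:exact_triplet} and~\ref{th:exact_terminal} to the top-$k$-NNs setting. In both variants the recipe is: (i) lower-bound the number of distinct top-$k$-NNs ordering profiles realizable on $n$ points; (ii) upper-bound, via Fact~\ref{fact:signs}, the number of sign-patterns of the degree-$2$ polynomials $p_{(i,i'),(j,j')}(X)=\lVert x_i-x_{i'}\rVert^2-\lVert x_j-x_{j'}\rVert^2$ that encode the relevant distance comparisons; and (iii) conclude by pigeonhole that when $d$ is too small, some $n$-point metric has no $d$-dimensional top-$k$-NNs ordinal embedding.

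For the with-mixed-comparisons variant, there are $nk$ distances to rank (one for each pair $(i,i')$ with $i'\in\mathrm{NN}(i)$). By an incremental construction analogous to the $G(n+1)$-style lower bound in Theorem~\ref{th:exact_triplet}, the number of realizable orderings is at least $(nk)!/\mathrm{poly}(n)$, giving $\log(\#\text{orderings})=\Theta(nk\log n)$. The comparison polynomials number $m=\binom{nk}{2}\approx n^2k^2/2$ in $l=nd$ variables; choosing $\beta=\mu nk$ for a sufficiently large constant $\mu$ in Fact~\ref{fact:signs} yields $\log(\#\text{sign-patterns}) = O(nd\log n)$. Comparing $nk\log n\lesssim nd\log n$ forces $d=\Omega(k)$.

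For the no-mixed-comparisons variant, only within-$\mathrm{NN}(i)$ orderings matter, and an analogous vertex-by-vertex construction realizes at least a polynomial fraction of all $(k!)^n$ possible profiles, so $\log(\#\text{orderings})=\Theta(nk\log k)$. The polynomial count drops to $m=n\binom{k}{2}\approx nk^2/2$, and choosing $\beta=\mu n$ exactly as in Theorem~\ref{th:exact_triplet} gives $\log(\#\text{sign-patterns})=(nd+k^2/(2\mu))\log n = O(nd\log n)$, provided $\mu$ is large enough to absorb the $k^2/(2\mu)\le nd$ term (which is fine in the regime $d=\Omega(k\log k/\log n)$ we are ruling out). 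Comparing $nk\log k\lesssim nd\log n$ then forces $d=\Omega(k\log k/\log n)$.

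The main obstacle I anticipate is justifying the combinatorial lower bound on the number of realizable ordering profiles, especially in the no-mixed case, since the triangle inequality may a priori couple the orderings across different neighborhoods $\mathrm{NN}(i)$. A clean way around this is a \emph{cluster} construction: partition the $n$ points into roughly $n/k$ well-separated groups and within each group place one designated ``center'' together with $k$ satellites whose radii can be independently perturbed. This yields independent top-$k$-NNs rankings across groups, and permits enough rankings per group (by a short perturbation argument mirroring the constructions of~\cite{bilu2005monotone,alon2008ordinal}) to produce $(k!)^{\Omega(n)}$ distinct profiles overall, which suffices for the counting bound. After that, the sign-pattern side is a direct application of Fact~\ref{fact:signs} with the $\beta$ choices indicated above.
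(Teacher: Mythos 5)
Your proposal follows the same sign-patterns-vs.-orderings counting framework as the paper's own proof of Theorem~\ref{th:topNNs}: count the realizable top-$k$-NNs orderings, upper-bound the sign-patterns of the comparison polynomials via Fact~\ref{fact:signs} with essentially the same choice of $\beta$, and conclude $d=\Omega(k)$ and $d=\Omega(k\log k/\log n)$ by pigeonhole. You also go a bit beyond the paper in attempting to justify the $(k!)^{\Omega(n)}$ lower bound on realizable no-mixed profiles, which the paper simply asserts; the well-separated clusters device is exactly the right move.

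One caveat about the cluster construction as literally written: ``one center together with $k$ satellites whose radii can be independently perturbed'' only makes the \emph{center's} ranking vary freely, giving $k!$ profiles per group of $k+1$ points and hence only $(k!)^{n/(k+1)}$ profiles total, i.e.\ $\log(\#\text{orderings})=\Theta(n\log k)$, which is a factor of $k$ short of the needed $\Theta(nk\log k)$. To actually obtain $(k!)^{\Omega(n)}$, within each group of $k+1$ points one should perturb \emph{all} $\binom{k+1}{2}$ intragroup pairwise distances around a common value (with groups mutually far apart so each $\mathrm{NN}(i)$ remains inside $i$'s group); this is always a metric, and then each group independently realizes at least the $G(k+2)\approx(k!)^{k/2}$ triplet orderings guaranteed by the argument in Theorem~\ref{th:exact_triplet}, giving $G(k+2)^{n/(k+1)}=(k!)^{\Omega(n)}$ overall. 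Since you cite the constructions of~\cite{bilu2005monotone,alon2008ordinal} in exactly this context, I read this as a phrasing slip rather than a conceptual gap --- but the construction as literally stated would not close the bound.
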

\begin{proof}
The proof is found in the Appendix~\ref{app:proofs}. 
\end{proof}

\section{Conclusion}
We studied several basic questions that arise in the context of ordinal embeddings (also contrastive embeddings) related to nearest-neighbor, ranking, recommendations, crowdsourcing and psychometrics, where we want to find euclidean representations that only respect the order among measured distances between $n$ items, rather than the lengths of those distances. For the well-motivated problem of triplets where information of the form ``is $i$ closer to $j$ or to $k$?'' is provided, we give almost tight lower bounds for the necessary dimension so as to preserve the triplets relations either exactly or approximately. Going beyond triplets, we study the interesting scenario of terminal ordinal embeddings and we present matching upper and lower bounds. Finally, we present lower bounds for the top-$k$-nearest-neighbors ordinal embeddings problem.

\paragraph{Acknowledgement:} This research was supported in part by the
NSF TRIPODS program (award DMS-2022448), Simons
Investigator Award and MIT-IBM Watson AI Lab.
\bibliographystyle{abbrvnat}
\bibliography{references.bib}
\appendix
\section{Omitted Proofs}\label{app:proofs}

\begin{proof}[Proof of Theorem~\ref{th:lower_sub}]
The proof proceeds in analogous manner as before, but the difference is that the term $kn-\tfrac{k^2}{2} \approx kn$, i.e., $\tfrac{k^2}{2} \ll kn$ for large $n$ and so it won't affect the computations. We present the relevant calculations below. The important quantities we need to compare are:
\begin{itemize}
    \item The logarithm of the total number of orderings:  $\left(kn-\tfrac{k^2}{2}\right)\log\left(kn-\tfrac{k^2}{2}\right)  \text{ vs }$ 
    \item The logarithm of sign patterns of polynomials: $\left(nd+\tfrac{m}{\mu n^x}\right)\log n^x$, with parameters $\mu,x$: $\mu n^x \le m$, where $m=\tfrac{k^2n^2-k^3n+\tfrac{k^4}{4}}{2}$, same as before.
\end{itemize}
We can make several simplifications based on the fact that $k=\lambda n^{1-\beta}=o(n)$. Ignoring lower order terms we have that:
\begin{itemize}
    \item  $k^2=o(kn)$ and also $m\approx \tfrac{k^2n^2}{2}=\Theta(k^2n^2)$.
    \item We can pick $x$ such that $n^x\approx kn=o(n^2)$ and $\mu n^x\ll m$ for any constant $\mu$.
    \item Let $k=c\cdot d$ for some constant $c$ (to be determined below).
\end{itemize}
Taking these into account we need to compare:
$cdn\log kn$ and $\left(nd+\tfrac{cdn}{2\mu}\right)\log kn$. 

We can set $\mu$ large enough so that the former quantity is larger than the latter whenever $c>1$, i.e., $k\ge cd$ for constant $c>1$. 
\end{proof}

\begin{proof}[Proof of Theorem~\ref{th:no-inter-terminal}]
The proof follows almost the same calculations as in Theorem~\ref{th:exact_terminal} and Theorem~\ref{th:lower_sub}. The reason is the following: if we cared about only same-terminal distance pairs we get:
\begin{itemize}
    \item  $\#\mathrm{orders} > (n-1)! (n-2)! ... (n-k)!$
    \item Again the number of paired comparisons is at most $m\approx \tfrac{k^2n^2}{2}=\Theta(n^2)$, so this is also the number of corresponding polynomials that determine the orderings.
    \item Number of orders is larger than $[(n-k)!]^k \implies \log\#\mathrm{orders} > k \log[(n-k)!]$. 
    \item The leading term again is $kn\log n$, and this is enough to beat the number of polynomials and get the tradeoffs we want. If we wanted to be more precise, we can use that $\log[(n-1)! (n-2)! ... (n-k)!]\approx (kn-\tfrac{k^2}{2})\log(n-k)$.
\end{itemize}
We can conclude that the same lower bounds (with same constants) hold.
\end{proof}

\begin{proof}[Proof of Theorem~\ref{th:topNNs}]
Following the approach as in the previous proofs, we compare the following:
\begin{itemize}
    \item Allowing for mixed comparisons, the number of possible orderings among distances in $\cup_{i}\mathrm{NN}(i)$ for all the items $i$ can be close to $(nk -o(nk))!$. Hence the logarithm is $\approx nk\log n$.

    \item Allowing for mixed comparisons, there are at most $m=\binom{nk}{2}$ paired distance comparisons (if we don't allow for mixed comparisons, $m$ is even smaller). So the logarithm of the sign patterns for the corresponding polynomials is $(nd+\tfrac{m}{\mu n^x})\log(\mu n^x)$ for any $x,\mu:$ $\mu n^x\le m$. For this calculation, it's sufficient to pick $x$: $n^x\approx kn=O(n^2)$ so $x \le 2$.
    \item Comparing the logarithms we get that there are more orderings than sign patterns as long as: $nk\log n \gg ndx\log n$. So the dimension $d$ must be at least $\Omega(k)$.
    \item Without mixed comparisons, the number of orderings per point is $k!$ since $|\mathrm{NN}(i)|=k$. So we get in total at least $(k!)^{\Theta(n)}$  distance orderings (so $\log(k!)^{\Theta(n)} \approx nk\log k$). This implies that for the embedding to preserve distance orderings it has to use dimension $d=\Omega\left( \tfrac{k\log k}{\log n}\right)$. 
\end{itemize}
\end{proof}

\end{document}